\newcommand{\SortNoop}[1]{}
\newtheorem{theorem}{Theorem}
\newcommand{\btheo}{\begin{theorem}}
\newcommand{\etheo}{\end{theorem}}
\newcommand{\bproof}{\begin{proof}}
\newcommand{\eproof}{\end{proof}}
\newtheorem{definition}[theorem]{Definition}
\newcommand{\bdefi}{\begin{definition}}
\newcommand{\edefi}{\end{definition}}
\newtheorem{fact}[theorem]{Fact}
\newcommand{\bprop}{\begin{fact}}
\newcommand{\eprop}{\end{fact}}
\newtheorem{corollary}[theorem]{Corollary}
\newcommand{\bcor}{\begin{corollary}}
\newcommand{\ecor}{\end{corollary}}
\newtheorem{example}[theorem]{Example}
\newcommand{\bex}{\begin{example}}
\newcommand{\eex}{\end{example}}
\newtheorem{lemma}[theorem]{Lemma}
\newcommand{\blemma}{\begin{lemma}}
\newcommand{\elemma}{\end{lemma}}
\newtheorem{remark}[theorem]{Remark}
\newcommand{\bremark}{\begin{remark}}
\newcommand{\eremark}{\end{remark}}
\newtheorem{conj}[theorem]{Conjecture}
\newcommand{\bconj}{\begin{conj}}
\newcommand{\econj}{\end{conj}}
\def\0{{\tt 0}} 
\def\1{{\tt 1}} 
\def\?{{\tt *}} 
\begin{document}
\title{The Space of Solutions of Coupled XORSAT Formulae}
\author{ S. Hamed Hassani, Nicolas Macris and Rudiger Urbanke \thanks{The authors are with School of Computer
 \& Communication Sciences, EPFL, Switzerland.
}
}

\maketitle

\begin{abstract}
The XOR-satisfiability (XORSAT) problem deals with a system  of $n$
Boolean variables and  $m$  clauses.  Each clause is a linear Boolean
equation (XOR) of a subset of the variables.  A $K$-clause is a
clause involving $K$ distinct variables. In the random $K$-XORSAT
problem a formula is created by choosing $m$ $K$-clauses uniformly
at random from the set of all possible clauses on $n$ variables.
The set of solutions of a random
formula exhibits various geometrical transitions as the ratio $\frac{m}{n}$ varies.

We consider a {\em coupled} $K$-XORSAT ensemble, consisting of a
chain of random XORSAT models that are spatially coupled across a
finite window along the chain direction.  We observe that the
threshold saturation phenomenon takes place for this ensemble and
we characterize various properties of the space of solutions of
such coupled formulae.  \end{abstract}

\section{Introduction}
Spatial coupling is a technique that starts with a graphical model
and a ``hard" computational task (e.g., decoding or more generally
inference) and creates from this a new graphical model for the same
task that has ``locally" the same structure but is computationally
``easy".  Kudekar, Richardson and Urbanke \cite{KRU1,KRU2} made the basic observation (in the context of coding theory)
that  on spatially-coupled graphs, low-complexity (message passing) algorithms suffice to achieve
optimal performance. Despite its very recent introduction, spatial
coupling has already had significant impact on coding, communications,
and compressive sensing 
(see for example \cite{henry2}-\cite{DJM})
and has lead to new insights in computer science and statistical
physics (see \cite{csp}).

We consider the effect of spatial coupling on random XORSAT formulae.
The XORSAT problem is the simplest instance among the class of
constraint satisfaction problems (CSP).  CSPs arise in many branches
of science, e.g., in statistical physics (spin  glasses),
information theory (LDPC codes), and in combinatorial optimization
(satisfiability, coloring).  These CSPs are  believed to
share a number of common structural properties, but some models are inherently
more difficult to investigate than others. It is therefore natural
to start with relatively ``simple'' CSPs if one wants to learn more
about the general behavior of this class of models.

It is relatively simple to capture the same basic properties in the
XORSAT problem due to its direct connection with linear algebra.
Among such properties, an important one is the geometry of the space of solutions, which as was already understood
a decade ago displays very interesting phase transitions \cite{DM,MRZ}. Recently in \cite{mon,ach}, a fairly complete
characterization of this geometry has been provided as a function
of the ratio of number of clauses to  number of variables. In
particular,  it is shown that for some range of values of this
parameter, the space of solutions breaks into many disconnected
``clusters". It is widely believed that such a cluster structure is closely connected to the failure 
 of standard message passing
algorithms to find solutions (e.g., the belief propagation algorithm).
In other words, it is believed that there is a strong connection
between the ``hardness'' of the problem and the geometry of the
solution space.  Therefore we call this regime the {\em hard-SAT}
regime.

Consider now what happens when we spatially couple such formulae.
As we will show in the following, a remarkable phenomenon called
threshold saturation takes place: the belief propagation algorithm
succeeds in solving the problem in the hard-SAT regime of the original (non-coupled) model.  This
immediately raises the question how the space of solutions changes
under spatial coupling. In other words,
what happens to the clusters?  A naive guess is that these clusters
become connected. As we will see, the answer is--yes!

Our main objective is to provide an explanatory picture of how the
geometry of the solution space is altered under spatial coupling.
This picture can be helpful in further understanding the mechanism
of spatial coupling, as well as in gaining some intuition about the
solution space of other coupled CSPs, or in designing efficient
algorithms for solving them \cite{csp}.

The outline of this paper is as follows. In Section~\ref{basic} we
introduce in detail the XORSAT problem and random $K$-XORSAT
ensembles.  We also explain in brief the related results on the
geometry of the solution space of these random formulae. In
Section~\ref{cou} we introduce the coupled $K$-XORSAT ensemble.
Using the results of \cite{henry} and \cite{csp} we then prove the
threshold saturation phenomenon for this ensemble. Finally, we
discuss the geometry of the space of solutions of this ensemble by
a direct use of the techniques in \cite{mon}.

\subsection{The $K$-XORSAT Ensemble: Basic Setting} \label{basic}

An XORSAT formula consists of $n$ Boolean variables $x_i \in \{0,1\}$, $i \in \{1,\cdots,n\}$, and a set of $m$ exclusive OR (XOR) constraints $c \in \{1,\cdots,m\}$. Each constraint, $c$, called from now on a {\em clause}, is a linear equation  consisting of the XOR  of some variables being equal to a Boolean value $b_c \in \{0,1\}$.   The number of variables involved in a clause is called the length of the clause.  Further, a clause of length $K$ is typically called a $K$-clause. Furthermore,  A $K$-XORSAT formula is a formula consisting only of $K$-clauses. In matrix form, a $K$-XORSAT formula can be represented as linear system 
\begin{equation}
\mathbb{H} \underline{x}=\underline{b}. 
\end{equation}
Here, the matrix $\mathbb{H}$ is an $m \times n$ matrix with entries $H_{c,i} \in\{\0,1\}$,  and $H_{c,i}$ is equal to $1$ if and only clause $c$ contains the variable $x_i$.   The vector $\underline{x}$ is an $m$ component vector representing the variables and the vector $\underline{b}$ is also an $m$ component vector representing the clause values $b_c$.
  
It is convenient to represent a XORSAT formula via a bipartite graph $G=(V \cup C,E)$, where we denote the set of variable nodes by $V$ and the set of clause nodes by $C$. We thus have $|V|=n$ and $|C|=m$. There is an edge between a clause $c \in V$ and a variable $i \in V$ if and only if  $c$ contains  $x_i$.   The set of edges of $G$ is denoted by $E$. 
%

Let us now explain the ensemble of random $K$-XORSAT formulae. 
Let $m=\lfloor \alpha  n \rfloor$, where  $\alpha$ is a positive real number and is called {\em the clause density}.  To choose an instance from the $K$-XORSAT ensemble, we proceed as follows. There are $m$ clauses of length $K$ and $n$ variables. Each clause picks uniformly at random a subset of length $K$ of the variables  and flips a fair coin to decide the value of $b_c$.  All the above steps are taken independent of each other. 
In other words, the random $K$-XORSAT  ensemble is defined by taking $\underline{b}$ uniformly at random in $\{0,1\}^m$ and $\mathbb{H}$ uniformly at random from the set of all the $m \times n $ matrices with entries in $\{\0,1\}$ that have exactly $K$ ones per row.  

One objective of the XORSAT problem is to specify whether a given formula has a solution or not. Standard linear algebraic methods allow us to accomplish this task with complexity $O(n^3)$. Here, we discus a linear complexity algorithm for solving XORSAT formulae called the peeling algorithm. In our case, this algorithm is known to be equivalent to the \emph{belief propagation}(BP) algorithm.

\subsection{The Peeling Algorithm}\label{peel}
We begin by a brief explanation of the algorithm. Let $G$ be an XORSAT formula. As mentioned previously, we can think of $G$ as a bipartite graph.
The algorithm starts with $G$ and in each step shortens $G$ until 
we either reach the empty graph or we can not make any further shortening. 
Assume now that there exists a variable $i$ in $G$ with degree $0$ or $1$.
In the former case, the value of the variable can be chosen freely. Also, in the latter case, assuming $c$ is the check node connected to $i$, it is easy to see that the 
value of $x_i$ can be determined after the values of the other variables connected to $c$ are specified. 
Hence, without loss of generality, we can remove $i$ and its neighboring clause (if any) from $G$ and search for a
solution for the graph $G\setminus i$.
In other words, finding a solution for $G$ is equivalent to finding  a solution for  $G\setminus i$.   
As a result, we can peel the variable $i$ from $G$ and do the same procedure on $G\setminus i$. 
We continue this process until the residual graph is empty or  it has no more variables with degree at most $1$. 
The final graph that we reach to by the peeling procedure is called the $2$-core or the maximal stopping set of $G$.
We recall that a stopping set of $G$ is a subgraph of $G$ containing a set of clauses and 
a set of variables where each clause  has degree $K$ and all the variables have degree at least $2$. 
The $2$-core is a stopping set of maximum size.
The peeling algorithm determines the $2$-core of a graph $G$.  
If the $2$-core is empty then the algorithm succeeds and it is easy to see that the solution can be explicitly found by backtracking.

The peeling algorithm has an equivalent message passing (MP) formulation. It can be shown that the message passing rules for the peeling algorithm are
also equivalent to the BP update rules. 
Further, if the formula $G$ comes from the $K$-XORSAT ensemble, then one can analyze the behavior of the peeling algorithm in a probabilistic framework called {\em density evolution} (DE). The DE equations can be cast into a simple scalar recursion \cite{mon-mez}
\begin{equation} \label{pl_rec}
x^{t+1}= 1-\exp \{- \alpha K (x^t) ^{K-1}\}, 
\end{equation}
with $x^0=1$. Here, $x_z^t$ is related to the fraction of edges present in the remaining graph at time $t$.
For the peeling algorithm to succeed, the value of $x^{t}$ should tend to $0$ as $t$ increases. This is possible if and only if the equation
\begin{equation} \label{pl_fp}
x= 1-\exp \{- \alpha K x ^{K-1}\} ,
\end{equation}
has a unique solution which is the trivial fixed point $x=0$. 
The net result is that the  peeling algorithm
 succeeds with high probability (w.h.p)  for $\alpha<\alpha_{d}(K)$ defined as
\begin{multline*}
\alpha_{d}(K) =\\ 
\sup \{ \alpha \geq 0 \,\,\,{\rm{ s.t. }} \,\,\,\forall x \in (0,1]: \,\, x >\alpha (1- \exp(-\frac{ K}{2} x))^{K-1}   \}.
\end{multline*}

For $\alpha>\alpha_d(K)$ the peeling algorithm is w.h.p stuck in the $2$-core of the graph. It can be shown \cite{mon} that the $2$-core consists  w.h.p of $n V(\alpha,K)(1+o(1))$ variables and 
$n C(\alpha,K)(1+o(1))$ clauses, where $V(\alpha,K)$ and $C(\alpha,K)$ are given as follows.
Let $x$ be the largest solution of \eqref{pl_fp} and $\hat{x}=x^{K-1}$, we have  
\begin{align}
&V(\alpha,K)= 1- (1 + K \alpha \hat{x}) \exp(-\alpha K \hat{x}),\\
 & C(\alpha,K)=\alpha \hat{x} (1- \exp(-\alpha K \hat{x})).
\end{align}  

\subsection{Phase Transitions and the Space of Solutions} \label{un_sol}

For a random $K$-XORSAT formula with $\alpha< \alpha_{d}(K)$, the peeling algorithm succeeds w.h.p and hence the formula has a solution. 
What happens for $\alpha> \alpha_{d} (K)$? It is easy to see that for $\alpha>1$ the formula has w.h.p no solution. 
In fact, there exists a critical density $\alpha_s(K)$ such that when the clause density crosses $\alpha_s(K)$, the $K$-XORSAT ensemble undergoes a phase transition from almost certain solvability to almost certain unsatisfiability.  The value $\alpha_s(K)$ is called the SAT/UNSAT threshold and  is given as
\begin{equation} \label{alpha_s}
\alpha_{s}(K) =
\sup \{ \alpha \geq 0 \,\,\,{\rm{ s.t. }} \,\, V(K,\alpha) >C(K,\alpha) \}.
\end{equation}
The value of $\alpha_d$ separates two phases. For $\alpha<\alpha_d$ the graph has no $2$-core whereas for $\alpha \in (\alpha_d, \alpha_s)$ the graph has a large $2$-core and no algorithm is known to find a solution in  linear time. These two phases differ also in the structure of their solution space as we explain now.  We assume without loss of generality that the vector $\underline{b}$ is the all-zero vector. Note here that a 
non-zero $\underline{b}$ affects the solution space of
 the homogeneous system only by a  shift and hence does not alter its structure.       

The solutions of a formula  are members of the Hamming cube $\{0,1\}^n$. For $x,y \in \{0,1\}^n$ we let $d(x,y)$ denote their Hamming distance. For $\alpha < \alpha_d$, there exists a constant $B<\infty$ such that that w.h.p the following holds \cite{mon}. Let $d= (\log n)^B$.  Consider two solutions $\underline{x},\underline{x}'$. Then, there exists a sequence of solutions $\underline{x}=\underline{x}_0, \underline{x}_1, \cdots, \underline{x}_r=\underline{x}'$ such that $d(x_i, x_{i+1}) \leq d$.  Thus, for $\alpha < \alpha_d$, the space of solutions can be imagined as a big cluster in which one can walk from one solution to another by  a numbers steps that are of size at most $d$ (sub-linear in $n$).  
For $\alpha \in (\alpha_d, \alpha_s)$ the space of solutions shatters into an exponential number of clusters. Each cluster corresponds to a solution of the $2$-core in the following sense. Given an assignment $\underline{x}$, we denote by $\pi (\underline{x})$ its projection onto the core. In other words, $\pi (\underline{x})$ is the vector of those entries   in $\underline{x}$ that corresponds to vertices in the core. Now, for a  solution of the core, $x_{\rm core}$, we define the cluster associated to $x_{\rm core}$ as the set of solutions to the whole formula such that $\pi (\underline{x})=x_{\rm core}$.  Hence, for each solution of the core, there exists one cluster in the space of solutions of the formula. It can be shown that each two solutions of the core differ in $\Theta(n)$ positions \cite{mon-mez}. Thus, any two solutions belonging to two different clusters  also differ in $\Theta(n)$ positions.  However, each cluster by itself has a connected structure in the sense that for any two solutions
$\underline{x},\underline{x'}$ belonging to the cluster, there exists a sequence of solutions
 $\underline{x}=\underline{x}_0, \underline{x}_1, \cdots, \underline{x}_r=\underline{x'}$ inside that cluster such that $d(x_i, x_{i+1}) \leq d$.   
 Figure~\ref{fig:space} shows a symbolic picture of the clustering of solutions in the two phases.  
  \begin{figure}[ht!]
 \begin{center} 
  \includegraphics[width=8cm]{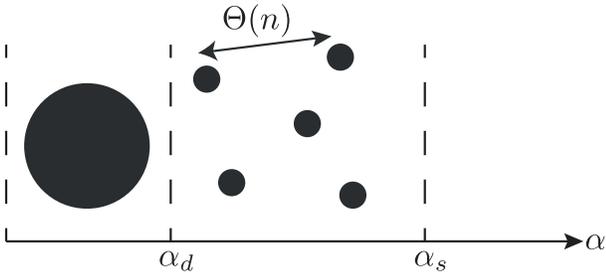}
 \end{center}
 \caption{A symbolic picture of the space of solutions for the $K$-XORSAT ensemble. Below $\alpha_d$ the space looks like a big connected 
 cluster whereas in the region $\alpha \in (\alpha_d, \alpha_s)$ the solution space breaks into exponentially many clusters far 
 away from each other.}\label{fig:space}
\end{figure}

\section{The Coupled $K$-XORSAT Ensemble} \label{cou}
This ensemble represents a chain of coupled underlying ensembles.
 Figure \ref{draw} is a visual aid but gives only a partial view.
 We consider $L-w+1$ clause positions $z \in \{0,1, \cdots, L-w\}$ and $L$ variable positions $z \in \{0,1, \cdots, L-1\}$.
At each variable position $z$, we lay down $n$ Boolean variables. Also, for each check position $z$, we lay down 
 $m= \lfloor \alpha n \rfloor$ clauses of length $K$. So in total we have $nL$ variables and $m(L-w+1)$ clauses.
Let us now specify how the set of edges, $E$, is chosen.
 Each clause $c$ at a position $z$,  chooses its $K$ variables via the following procedure. We first pick a position $z+k$ with $k$ uniformly random in 
the window $\{0,\cdots, w-1\}$, then we pick a variable  uniformly at random among all the variables located at position $z+k$, and finally 
we connect the clause and the variable. The value of $b_c$ is also chosen by flipping a fair coin.  
\begin{figure}[htp]
\begin{centering}
\input{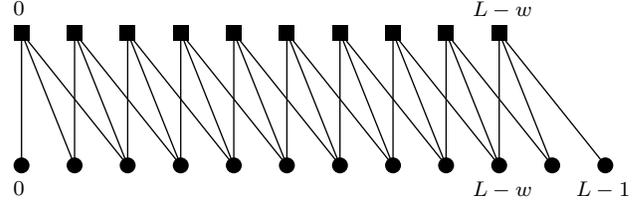}
\caption{{\small A representation of the geometry of the graphs with window size $w=3$ along
the ``longitudinal chain direction'' $z$. The ``transverse direction'' is viewed from the top. At each position there 
is a stack of $n$ variable nodes (circles) and a stack $m$ constraint nodes (squares). 
The depicted links between constraint and variable nodes represent stacks of edges.}}
\label{draw}
\end{centering}
\end{figure}
This ensemble is called the (spatially) coupled $K$-XORSAT ensemble and an instance of it is called a coupled formula.

It is also useful to consider another ensemble of coupled graphs where positions are placed on a ring. 
This ensemble is called the ring ensemble and is obtained as follows.
 We consider $L$ clause positions $z \in \{0,1, \cdots, L-1\}$ and $L$ variable positions $z \in \{0,1, \cdots, L-1\}$.
At each variable position $z$, we lay down $n$ Boolean variables. Also, for each check position $z$, we lay down 
 $m= \lfloor \alpha n \rfloor$ clauses of length $K$. So in total we have $nL$ variables and $mL$ clauses.
 Each clause $c$ at position $z$,  chooses its $K$ variables via the following procedure. We first pick a position $\mod(z+k,L)$ with $k$ uniformly random in 
the window $\{0,\cdots, w-1\}$, then we pick a variable node uniformly at random among all the variables located at position $z+k$, and finally 
we connect the clause and the variable. The value of $b_c$ is also chosen by flipping a fair coin.  
It can be easily seen that by picking a random ring formula and removing all of its clauses that are placed at positions $L-w+1, \cdots, L-1$ 
we generate a  coupled formula.


\subsection{Threshold Saturation}\label{thresh_sat}
The peeling algorithm can be used for the coupled and ring formulae in the same manner as explained above. 
We denote by $\alpha_{d, L, w}(K)$ and $\alpha_{d,L,w}^{\rm ring} (K)$ the threshold for the emergence w.h.p of a non-empty $2$-core
 for the coupled and ring ensembles. We also denote the SAT/UNSAT threshold for these ensembles by 
  $\alpha_{s, L, w}(K)$ and $\alpha_{s,L,w}^{\rm ring} (K)$, respectively.

Let us first consider the coupled ensemble. 
A similar message passing analysis as above yields  a set of one-dimensional coupled recursions 
\begin{equation}
 x_z^{t+1} = 1- \frac{1}{w}\sum_{l=0}^{w-1}
 \exp \bigl\{-\alpha K  ( \frac{1}{w} \sum_{k=0}^{w-1} x_{z+k-l}^t)^{K-1} \bigr\},
\label{pure-literal-rec-cou}
\end{equation}
with boundary values $x_z^t=0$ for $z \geq L$ and $z<0$.  
This recursion results in the one-dimensional fixed point equations
\begin{equation}
 x_z = 1- \frac{1}{w}\sum_{l=0}^{w-1}
 \exp \bigl\{-\alpha K  ( \frac{1}{w} \sum_{k=0}^{w-1} x_{z+k-l})^{K-1} \bigr\},
\label{pure-literal-fixed-point-cou}
\end{equation}
with boundary values $x_z^t=0$ for $z \geq L$ and $z<0$.  
We recall that  $\alpha_{d, L, w}(K)$ is the highest clause density for which the fixed 
point equation \eqref{pure-literal-fixed-point-cou} 
admits a unique solution that is the all-zero solution. 
\begin{lemma}
We have
\begin{equation}\label{thr_cou}
\lim_{w \to \infty} \lim_{L \to \infty}  \alpha_{d,L,w}(K)=\lim_{L \to \infty}  \alpha_{s,L,w}(K)=\alpha_s(K). 
\end{equation}
\end{lemma}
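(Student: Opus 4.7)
The plan is to bracket the common limit from both sides using the trivial inequality $\alpha_{d,L,w}(K)\le \alpha_{s,L,w}(K)$, which holds because any successful run of the peeling algorithm certifies a solution via back-substitution. So I need to show (a) $\limsup_{L\to\infty}\alpha_{s,L,w}(K)\le \alpha_s(K)$ for each fixed $w$, and (b) $\liminf_{w\to\infty}\liminf_{L\to\infty}\alpha_{d,L,w}(K)\ge \alpha_s(K)$. Combined with the trivial chain, these force $\alpha_{d,L,w}$ and $\alpha_{s,L,w}$ to share the common value $\alpha_s(K)$.

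For (a), I would first work with the ring ensemble, which is translation-invariant and has exactly $nL$ variables and $mL$ clauses, hence effective clause density $\alpha$. A standard first-moment computation of the expected number of solutions, carried out just as in the uncoupled case of \cite{mon-mez} since the local degree structure is identical, gives an expression that scales as $2^{nL[\Sigma(\alpha)+o(1)]}$, where $\Sigma(\alpha)$ is strictly negative for $\alpha>\alpha_s(K)$. Markov's inequality then yields $\alpha_{s,L,w}^{\rm ring}(K)\to \alpha_s(K)$ as $L\to\infty$. Since deleting the $(w-1)m=O(wn)$ clauses at positions $L-w+1,\dots,L-1$ turns the ring formula into the coupled formula, and removing clauses can only enlarge the set of solutions, we get $\alpha_{s,L,w}(K)\le \alpha_{s,L,w}^{\rm ring}(K)\cdot L/(L-w+1)$ up to vanishing corrections. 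With $w$ fixed and $L\to\infty$, this delivers $\limsup_{L\to\infty}\alpha_{s,L,w}(K)\le \alpha_s(K)$.

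For (b), I would appeal to the general threshold-saturation machinery of \cite{henry} and \cite{csp}. The coupled recursion \eqref{pure-literal-rec-cou} is a monotone, smooth, scalar spatially-coupled density evolution with boundary seed $x_z=0$, and its single-letter version is precisely \eqref{pl_rec}. Those references establish that, as $w\to\infty$ and then $L\to\infty$, the coupled BP/peeling threshold converges to the \emph{potential} (Maxwell/area) threshold of the underlying uncoupled recursion. It thus remains to identify this potential threshold with $\alpha_s(K)$: integrating $x-(1-\exp(-\alpha K x^{K-1}))$ yields a potential $U_\alpha(x)$ whose non-trivial minimum crosses the trivial minimum at $x=0$ precisely where the complexity of the $2$-core vanishes, i.e.\ where $V(\alpha,K)=C(\alpha,K)$, which by \eqref{alpha_s} is exactly $\alpha=\alpha_s(K)$. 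Hence the conclusion of \cite{henry,csp} reads $\liminf_{w}\liminf_L\alpha_{d,L,w}(K)\ge \alpha_s(K)$.

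The main obstacle is the potential-threshold identification inside step (b): one must verify that the analytic Maxwell construction on the scalar map \eqref{pl_rec} reproduces the combinatorial SAT/UNSAT threshold $\alpha_s(K)$ defined by $V(\alpha,K)=C(\alpha,K)$, so that the general threshold-saturation statements of \cite{henry,csp} deliver $\alpha_s(K)$ rather than some unrelated analytic constant. Once this identification is in place, the two directions (a) and (b), sandwiched by $\alpha_{d,L,w}(K)\le\alpha_{s,L,w}(K)$, immediately yield both equalities in \eqref{thr_cou}.
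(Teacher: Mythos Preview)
Your treatment of part~(b) is essentially the paper's argument: identify the potential threshold of the scalar recursion \eqref{pl_rec} with $\alpha_s(K)$ via \eqref{alpha_s}, and then invoke \cite{henry} (and \cite{csp}) to conclude $\lim_{w}\lim_{L}\alpha_{d,L,w}(K)=\alpha_s(K)$. That half is fine.

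Part~(a), however, has a genuine gap. The monotonicity you invoke points the wrong way. Deleting the clauses at positions $L-w+1,\dots,L-1$ turns a ring instance into a coupled instance and can only \emph{enlarge} the solution set; hence for every fixed $\alpha$ the coupled formula is satisfiable whenever the ring formula is, which yields $\alpha_{s,L,w}(K)\ge \alpha_{s,L,w}^{\rm ring}(K)$, not the reverse. Your proposed inequality $\alpha_{s,L,w}(K)\le \alpha_{s,L,w}^{\rm ring}(K)\cdot L/(L-w+1)$ does not follow from clause deletion, and the density factor $L/(L-w+1)$ does not repair the direction: there is no natural embedding of the coupled chain at parameter $\alpha$ as a sub-formula of a ring at a rescaled parameter. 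Consequently, the first-moment upper bound you obtain for the ring, $\alpha_{s,L,w}^{\rm ring}(K)\le\alpha_s(K)$, does not transfer to an upper bound on $\alpha_{s,L,w}(K)$. Note also that a direct first moment on the coupled formula itself gives only $\alpha_{s,L,w}(K)\le L/(L-w+1)\to 1$, and a core-based first moment on the coupled chain would require controlling the inhomogeneous core profile along $z$, which is considerably more than ``standard''.

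The paper avoids this obstacle entirely: it obtains $\lim_{L}\alpha_{s,L,w}(K)=\alpha_s(K)$ for each fixed $w$ by citing the interpolation arguments of \cite{csp}, which compare the coupled free energy (hence satisfiability threshold) directly to the uncoupled one without passing through the ring. That is the missing ingredient in your plan; once you replace your step~(a) by such an interpolation statement (or any argument giving the upper bound $\limsup_L\alpha_{s,L,w}(K)\le\alpha_s(K)$ directly for the coupled chain), the sandwich $\alpha_{d,L,w}\le\alpha_{s,L,w}$ together with~(b) closes as you intended.
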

\begin{proof}
The fact that $\alpha_{s,L,w}$ tends to $\alpha_s$ as $L$ grows large, follows from the interpolation arguments of \cite{csp}. For the other limit, from \eqref{alpha_s} it can be shown that $\alpha_s$  corresponds to the potential threshold (defined in \cite{henry}) of the scalar recursion 
\eqref{pl_rec}. Hence, it follows from \cite[Theorem 1]{henry} that  $ \lim_{w \to \infty} \lim_{L \to \infty} \alpha_{d,L,w}$ tends to $\alpha_s$.
\end{proof}
As a result, as $L$ and $w$ grow large the peeling algorithm succeeds at densities very close to $\alpha_s(K)$.
Table~\ref{table-xor} contains some numerical predictions of $\alpha_{{\rm d},L,w}(K)$. 
\begin{table}
\centering
\begin{tabular}{c c c c c c c c c  }
 K & $3$ & & $4$  & & $5$ & & $7$   \\
\hline
$\alpha_s$           &  $0.917$    & &  $0.976$    & &   $0.992$       &&  $0.999$     \\
\hline
$\alpha_{d}(K)$ & $0.818$ & & $0.772$ & & $0.701$ & & $0.595$  \\
\hline
$\alpha_{d, L=80, w=5}(K)$   &  $0.917$ & & $0.977$ & &  $0.992$ & &$0.999$  \\
\end{tabular}
\caption{{\small {\it First line}: phase transition threshold for $K$-XORSAT. 
{\it Second line}: peeling threshold for the  uncoupled ensemble. {\it Third line}:  peeling threshold for the coupled ensemble with $w=5$, $L=80$.}}
\label{table-xor}
\end{table} 
For the ring ensemble, the fixed point equation for the peeling algorithm become
\begin{equation}
 x_z = 1- \frac{1}{w}\sum_{l=0}^{w-1}
 \exp \bigl\{-\alpha K  ( \frac{1}{w} \sum_{k=0}^{w-1} x_{\!\!\!\!\!\!\mod(z+k-l,L)})^{K-1} \bigr\}.
\end{equation}
It is easy to see that for $\alpha>\alpha_d(K)$, the above set of  fixed point equations admit a nontrivial solution in the following form.
For $z \in \{0,1, \cdots, L-1\}$, we have $x_z=x$, where $x$ is the largest solution the  FP equation in \eqref{pl_fp}. For $\alpha<\alpha_d$, it is also clear that 
there is only one solution which is the all-zero solution. 
Hence, for the ring ensemble we obtain for any choice of $L$ and $w$ 
\begin{equation} \label{thr_ring}
\alpha_{d,L,w}^{\rm ring}(K)= \alpha_d(K).
\end{equation}   
By combining \eqref{thr_ring} and \eqref{thr_cou}, one observes the following remarkable phenomenon. Let $L$ and $w$ be large but finite numbers such that $L \gg w$. 
For these choices of $L,w$ we have from \eqref{thr_cou} that $\alpha_{d,L,w}(K) \approx \alpha_s(K)$. Also, let $\alpha \in [\alpha_d, \alpha_{d,L,w}]$ and pick a formula from the ring ensemble. We deduce from \eqref{thr_ring} that such a formula has a non-trivial $2$-core. Furthermore, it can be shown that the $2$-core has a circular structure and for each position $z \in \{0, \cdots, w-1\}$,  it has $nV(\alpha,K)(1+o(1))$ variables $nC(\alpha,K)(1+o(1))$ clauses. Now, assume that from this $2$-core we remove all the clauses at positions $ L-w+1, \cdots,L-1$ (i.e., we open the ring, see Figure~\ref{fig:cores}) and run the peeling algorithm on the remaining graph. From \eqref{thr_cou} we deduce that the peeling algorithm succeeds on the remaining graph in the sense that it continues all the way until it reaches the empty graph.  Note here that the ratio of the clauses that we remove from the $2$-core is $\frac{w}{L}$ which vanishes as we choose $L \gg w$.         
  \begin{figure}[ht!]
 \begin{center} 
  \includegraphics[width=8cm]{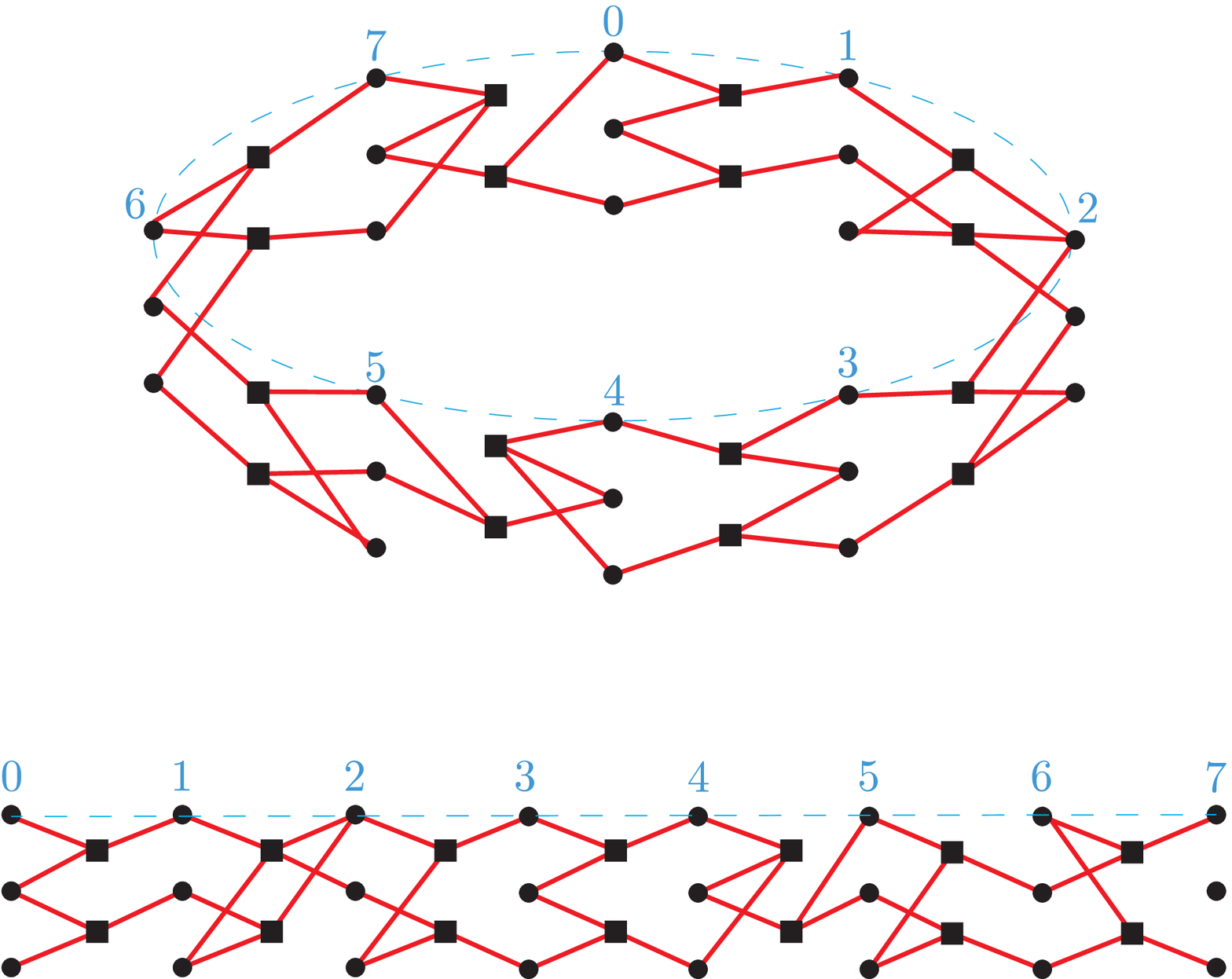}
 \end{center}
 \caption{For $\alpha \in (\alpha_d,\alpha_{d,L,w})$, a random formula from the ring ensemble has w.h.p a non-trivial $2$-core. The top figure is a simple example of a $2$-core associated to a ring formula with $L=8$ and $w=2$. When we open the $2$-core by removing the check nodes at positions $L-w+1, \cdots, L-1$ (the bottom figure), the remaining graph has w.h.p no $2$-core.  \label{fig:cores}}
\end{figure}

\subsection{The Set of Solutions}
We now focus on the geometrical properties of the space of  solutions of the coupled and ring formulas. Given the fact that for 
$\alpha \in (\alpha_d, \alpha_{s})$ a ring formula has a core, we deduce that for this region of $\alpha$ the  set of solutions of a ring formula resembles the set of solutions of an uncoupled formula which was explained in Section~\ref{un_sol}. 
 In other words, the  space of solutions of a ring formula
shatters into exponentially many clusters (see Figure~\ref{fig:space}). Each cluster corresponds to a unique solution of the $2$-core. Also, each cluster is itself connected and the distance between any two different clusters is $\Theta(nL)$.  
Now, assume $L$ and $w$ are large but finite numbers such that $L \gg w$. 
For these choices of $L,w$ we have from \eqref{thr_cou} that $\alpha_{d,L,w}(K) \approx \alpha_s(K)$. 
Let $\alpha \in (\alpha_d, \alpha_{d,L,w})$ and pick a formula from the coupled ensemble.  Let us denote this formula by $F$ and its set of solutions by $S$.
This formula w.h.p does not have a core. Also, we keep in mind that a coupled formula can be obtained from a typical ring formula by removing the 
clauses at the last $w$ positions. We denote such a ring formula by $F^{\rm ring}$ and its set of solutions by $S^{\rm ring}$.  We know that $S^{\rm ring}$ shatters into exponentially many clusters.   It is easy to see that $ S^{\rm ring} \subseteq S $. As a result $S$ contains all the clusters of $S^{\rm ring}$. Given these facts, how does the space of the space $S$ look like? In particular how are the two spaces $S$ and $S^{\rm ring}$  related? We now show that the space $S$ is a connected cluster.
 \begin{theorem}\label{main}
 Let $\alpha \in ( \alpha_d, \alpha_{d,L,w} )$. Consider a random coupled $K$-XORSAT formula and let $S$ be its set of solutions. The set $S$ is a connected cluster in the following sense. There exists a $B=B(\alpha,K)<\infty$ such that for any   two solutions $\underline{x},\underline{x'} \in S$, there exists a sequence of solutions $\underline{x}=\underline{x}_0, \underline{x}_1, \cdots, \underline{x}_r=\underline{x'}$ such that $d(x_i, x_{i+1}) \leq (e^L \log n)^B$.    
 \end{theorem}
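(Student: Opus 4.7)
The plan is to exploit the algebraic structure of $S$ together with the peeling structure of the coupled formula. Since $S$ is an affine subspace over $\GF(2)$, any two solutions $\underline{x},\underline{x}'\in S$ differ by an element $\underline{e}=\underline{x}\oplus\underline{x}'$ of the kernel $K=\ker(\mathbb{H})$. It therefore suffices to exhibit a generating family $\mathcal{E}\subset K$ whose elements all have Hamming weight at most $W:=(e^L\log n)^B$: any $\underline{e}\in K$ can then be written as $\underline{e}=\underline{e}^{(1)}\oplus\cdots\oplus\underline{e}^{(r)}$ with $\underline{e}^{(j)}\in\mathcal{E}$, and the partial sums produce a walk $\underline{x}=\underline{x}_0,\underline{x}_1,\dots,\underline{x}_r=\underline{x}'$ inside $S$ with $d(\underline{x}_i,\underline{x}_{i+1})=|\underline{e}^{(i+1)}|\le W$.

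To construct $\mathcal{E}$, I would use the fact that for $\alpha<\alpha_{d,L,w}$ the peeling algorithm w.h.p.\ reduces $F$ to the empty graph. The peeling history matches each ``determined'' variable $v$ to a unique clause $c(v)$ (the clause of degree one at the moment $v$ was peeled), while the ``free'' variables (those peeled at degree $0$) index a basis of $K$. For each free variable $v_f$, let $\underline{e}^{(v_f)}$ denote the codeword obtained by setting $v_f=1$, all other free variables to $0$, and back-substituting in reverse peeling order, and take $\mathcal{E}=\{\underline{e}^{(v_f)}:v_f\text{ free}\}$.

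The heart of the argument is to bound the weight of each $\underline{e}^{(v_f)}$, which equals the size of the ``dependency cone'' of $v_f$: the set of determined variables whose back-substituted value flips when $v_f$ is flipped. I would bound this cone via a local-weak-limit / branching-process argument in the spirit of \cite{mon}, adapted to the coupled ensemble: conditional on the peeling history, the cone is stochastically dominated by a branching process on the residual graph whose per-generation expansion is governed by the derivative of the local density-evolution map \eqref{pure-literal-rec-cou}. Because the coupled peeling succeeds via a wave sweeping across the chain, this expansion can compound at most across the $L$ positions of the chain, giving a multiplicative factor $\kappa^{O(L)}$ for some $\kappa=\kappa(\alpha,K)<\infty$; together with the standard $(\log n)^{O(1)}$ tree-size fluctuations of an uncoupled copy one obtains $|\underline{e}^{(v_f)}|\le(e^L\log n)^B$ w.h.p., uniformly in $v_f$ via a union bound over the at most $nL$ free variables.

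The main obstacle I expect is precisely this weight bound. The dependency cone can in principle traverse the whole chain, and one must rule out rare events in which the peeling ``stalls'' at some position, since such a stall could cause the branching process to survive more than $O(L)$ generations. This demands a quantitative concentration of the peeling statistics at every position $z$, uniform in $z\in\{0,\dots,L-1\}$, together with the monotonicity of the coupled density evolution that underlies Lemma~1. Adapting the techniques of \cite{mon} from a single copy to the coupled chain then introduces the $e^L$ factor as essentially the only new ingredient, coming from the per-position amplification compounded along the chain.
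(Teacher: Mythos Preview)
Your reduction is exactly the one the paper uses: show that $\ker(\mathbb{H})$ admits a \emph{sparse basis} obtained from the peeling back-substitution, with one basis vector per free (degree-$0$) variable, and then walk between solutions by flipping one basis vector at a time. The construction of $\mathcal{E}$ you describe coincides with the paper's.

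The one place where you diverge from the paper is in how you propose to bound $|\underline{e}^{(v_f)}|$. You plan to dominate the dependency cone by a branching process whose per-generation mean is read off from the Jacobian of \eqref{pure-literal-rec-cou}, and then argue that the process compounds across at most $O(L)$ positions. The paper takes a cleaner and more robust route: it runs \emph{synchronous} peeling (peel all degree-$\le 1$ variables simultaneously in each round), calls $T(G)$ the number of rounds until the graph empties, and observes the purely combinatorial fact that every nonzero coordinate of $\underline{e}^{(v_f)}$ lies within graph distance $T(G)$ of $v_f$. Hence $|\underline{e}^{(v_f)}|\le \max_i |B_G(i,T(G))|$. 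The weight bound then splits into two independent estimates: (i) $T(G)\le vL + B_1\log\log n$ w.h.p., because the coupled DE wave sweeps the chain in $vL$ rounds down to a small constant $\delta$, after which the scalar DE contracts \emph{doubly exponentially}; and (ii) $|B_G(i,r)|\le e^{B_2 r}$ w.h.p.\ uniformly in $i$, by the local-tree growth estimate from \cite{mon}. Multiplying gives $(e^L\log n)^B$.

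This decomposition neutralizes exactly the obstacle you flag. You worry about rare ``stalls'' that let the branching process survive more than $O(L)$ generations; in the paper's framing this is absorbed into the bound on $T(G)$, and the $\log\log n$ term (not $\log n$) comes for free from the doubly-exponential tail of the scalar recursion once every $x_z$ is below $\delta$. Your branching-process route could presumably be made to work, but it forces you to control the cone growth and the peeling depth simultaneously; the paper's separation into ``depth $T(G)$'' $\times$ ``ball growth'' is both simpler and avoids any stochastic-domination argument conditional on the peeling history.
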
     
 {\em Proof sketch:} 
 The proof of this theorem essentially mimics the proof of Theorem 2 in \cite{mon} except for the last part. For the sake of briefness, we only give an sketch of the proof.  The proof goes by showing that the set of solutions of the equation $\mathbb{H} \underline{x}=0$, i.e. the kernel of the matrix $\mathbb{H}$, has a {\em sparse basis}. In other words,  there exists vectors $y_1, y_2, \cdots, y_I$ that span the space ${\rm kernel}(\mathbb{H})$, and each of the 
 vectors $y_i$ has a low weight, i.e., $w(y_i) \leq  (e^L \log n)^B $ where $w(\cdot)$ denotes the Hamming weight. We call such a basis a sparse basis. It is easy to see that if  such a basis exists for the
  space of solutions, then the result of the theorem holds.  
  
  We now proceed by explicitly constructing such a basis. 
  We first show that if the matrix $\mathbb{H}$ has no core, then the peeling procedure provides us with a natural choice of a basis for  ${\rm kernel}(\mathbb{H})$. We then show that such a basis is indeed sparse. In this regard, we consider an slightly modified, but equivalent, version of the peeling algorithm called the {\em the synchronous peeling} algorithm.  
 Given an initial formula (graph) $G$,  this algorithm consists of $T(G)$ rounds $t=1,2, \cdots, T(G)$. The residual graph at the end of round $t$ is denoted by $J_t$. We also let $J_0=G$. We denote the set of clauses, variables and edges removed at round $t$ by $(C_t,V_t, E_t)$. Hence for $t \geq 1$ we have $J_{t-1}=J_{t} \cup (C_t, V_t, E_t)$. At each round $t$, the algorithm considers the graph $J_{t-1}$ and removes all the variable nodes that have degree $1$ or less  together with all the clauses (if any) connected to these variables.   It is easy to see that synchronous peeling is somehow a compressed version of the peeling algorithm mentioned in Section~\ref{peel}. Assuming that the initial graph $G$ has no core, the final $J_{T(G)}$ is empty. 
 
 To ease the analysis, let us re-order the  clauses and the variables in the following way. We start from the clauses in 
 $C_1$  and order theses clauses (in an arbitrary way) from $1$ to $|C_1|$. We then consider clauses in $C_2$ and order them  
 (in an arbitrary way) from $| C_1| +1$ to $|C_1|+ |C_2|$ and so on.  We do the same procedure for the variable nodes but with the following additional ordering. Within each set $V_t$, the ordering is chosen in such a way that nodes that have degree $0$ in $J_{t-1}$ appear with a smaller index than the ones that have degree $1$.   Now, with such a re-ordering of the nodes in the graph, the matrix $H$ has the following fine structure. For the sets $P \subseteq C$ and $Q \subseteq V$, we let $\mathbb{H}_{P,Q}$ be the sub-matrix of $\mathbb{H}$ that consists of elements of $\mathbb{H}$ whose rows are $c \in P $ and columns are $i \in Q$. The matrix $\mathbb{H}$ can be partitioned into $T(G) \times T(G)$ block matrices $\mathbb{H}_{C_s, V_t}$ where $1 \leq s,t \leq T(G)$ such that for $s>t$, $\mathbb{H}_{C_s, V_t}$   is the all-zero matrix and the diagonal blocks $\mathbb{H}_{C_t, V_t}$ have a staircase structure. Here, by a staircase structure we mean that the set of columns of   $\mathbb{H}_{C_t, V_t}$ can be partitioned into $|C_t|+1$ groups $\mathcal{C}_0, \cdots, \mathcal{C}_{|C_t|}$ such that the columns in $\mathcal{C}_0$ are all-zero and the  columns in $\mathcal{C}_i$ have only their $i$-th entry equal to $1$ and the rest are equal to $0$.   Given such a decomposition of $\mathbb{H}$, it is now easy to see how one can find a basis for its kernel. In fact, the matrix $\mathbb{H}$ has essentially an upper triangular structure. With this structure, one can apply the method of back substitution \cite[Lemma 3.4]{mon} to solve the equation $\mathbb{H} \underline{x}=0$ and find the kernel of $\mathbb{H}$. Here, for the sake of briefness we just mention the final result.   We partition $V$ into a disjoint union $V=U \cup W$ in a way that $\underline{x}_W$ will be our set of independent variables and $\underline{x}_U$ will be the set of dependent ones (i.e., $\underline{x}_U$ can be expressed in terms of $\underline{x}_W$). The partition is then constructed by letting $W= W_1 \cup W_2 \cdots \cup W_{T(G)}$ and $U= U_1 \cup U_2 \cdots \cup U_{T(G)}$. For each $t$, we construct $W_t$ by using the staircase structure of $\mathbb{H}_{C_t, V_t}$. We recall that the columns of $\mathbb{H}_{C_t, V_t}$ have the partition 
 $V_t=\mathcal{C}_0 \cup \mathcal{C}_1 \cdots \cup \mathcal{C}_{|C_t|}$. We then construct $W_t$ as   $W_t=\mathcal{C}_0 \cup \mathcal{C}'_1 \cdots \cup \mathcal{C}'_{|C_t|}$, where $ \mathcal{C}'_i$ is constructed from $ \mathcal{C}_i$ by removing  an arbitrary element from it ($ \mathcal{C}'_i$ is empty if $|\mathcal{C}_i|=1$ ).  In other words, among the variables in $ \mathcal{C}_i$ we choose one as the dependent variable and let the others be independent variables in $W_t$.  We then let $U_t = V_t \setminus W_t$.  With the sets  $W$ and $U$ explained as above, let us reorder the variable in $V$ as $U$ followed by $W$, i.e., we reorder the variables such that we can write
  $\underline{x}= (\underline{x}_U, \underline{x}_W)$.  One can show that the columns of the matrix
\begin{equation} \label{BBtilde}
 \mathbb{K}=
\left[  \begin{array}{c}
\mathbb{H}_{C,U}^{-1} \mathbb{H}_{C,W} \\ \mathbb{I}
\end{array}
\right],
\end{equation}
form a basis for the set of solutions. Here,  the matrix $\mathbb{I}$ denotes the identity matrix of size $|W|= |V| - |C|$. Also, if $\mathbb{K}_{i,j}=1$ then we have 
$d_G(i,j) \leq T(G)$, where by $d_G(i,j)$ we mean the distance between variables $i,j$ in the graph $G$.  

It is now easy to show that the Hamming weight of any column of $\mathbb{K}$ is bounded above by the value $\max_{i \in V} | B_G(i, T(G)) | $, where by
$B_G(i, T(G))$ we mean the set of variables $j$ such that $d_G(i,j) \leq T(G)$.    In the last step, we argue that with high probability 
\begin{equation} \label{T(G)}
T(G) \leq v L + B_1 \log \log n,
\end{equation} 
where $v$ and $B_1$ are finite constants.
From \eqref{T(G)}, \cite[Lemma 3.11]{mon}, and the fact the  coupled ensemble has the same local structure as the un-coupled ensemble, we then deduce that
w.h.p  $\max_{i \in V} | B_G(i, T(G)) |  \leq e^{B_2 T(G))} \leq (e^L \log(n))^B $, where $B$ and $B_2$ are finite constants.  It remains to justify \eqref{T(G)}.  Consider the 
DE equations \eqref{pure-literal-rec-cou} starting from the initial point $x_z^0=1$ for $1\leq z \leq L-1$ and $x_z^0=0$ for $z$ at the boundaries. Let $\delta$ be a (very) small constant. It can be shown from \cite{KRU3} that  that there exists a constant $v=v(\alpha,K, \delta)<\infty$ such that $x_z^{v L} \leq \delta$ for  all $z \in \{0,1, \cdots, L-1\} $.  
 In other words, the effect of the boundary (i.e., $x_z^0=0$ for $z\geq L$ and $z <0$) propagates towards the positions at the middle in wave-like manner and with a speed $v$ and hence at time $t=vL$ all the values $x_z^t$ are small. Once the value of  $x_z^t$ is 
 sufficiently small then it converges to $0$ doubly exponentially fast. Hence, intuitively, the synchronous peeling algorithm needs w.h.p an extra $B_1 \log \log n$ steps to clear out the whole formula and the total time taken by peeling will be $vL+B_1 \log  \log n$ . Of course, this is just an intuitive argument. A formal analysis can be followed similar to  \cite[Lemma 3.11]{mon}.  
\subsection{An Intuitive Picture of the Sparse Basis}
As we mentioned in the end of Section~\ref{thresh_sat}, a ring formula with density $\alpha \in (\alpha_d, \alpha_{d,L,w})$ has a core. The core has a circular structure with roughly $nC(\alpha,K) $ clauses and $nV(\alpha,K) $ variables in each position $z \in \{0,1,\cdots, L-1\}$. Further, each two solutions of the core are different in $\Theta(nL)$ positions. Now, consider the formula obtained by removing the clauses at the last $w-1$ positions of the core (i.e., positions $L-w+1, \cdots,L-1$). We call such a formula the {\em opened core}. We know that the peeling algorithm succeeds on the opened core and from Theorem~\ref{main} its solution space is a connected cluster and  admits a sparse basis. So the distant solutions of the original core are now connected to each other via the new solutions spanned by this sparse basis. Our objective is now to see, at the intuitive level, how its spare basis looks like. 

All the variables in the opened core have degree at least two except the ones at the two boundaries (we call the first $w-1$  positions and the last $w-1$ positions the boundaries of the chain). Once the synchronous peeling algorithm begins, the effect of these low degree variables at the boundaries starts to propagate like a wave towards the middle of the chain. The algorithm evacuates the positions one-by-one with a constant speed $v$ approaching the middle \cite{KRU3}.   A simple, albeit not very accurate, analogy is a chain of properly placed domino pieces. Once we topple a boundary piece the whole chain is toppled with roughly a constant speed. 

Consider the peeling algorithm explained in Section~\ref{peel}. This algorithm removes the variables in the graph one-by-one. Each variable that is removed in this algorithm has either degree $0$ or $1$.  A variable that, at the time of being peeled,  has degree $0$ is called an \emph{independent} variable. A variable of degree $1$ is called a \emph{dependent} variable. One can easily see that the  definition of an independent (dependent) variable is equivalent to the definition given in the proof of Theorem~\ref{main}. In Theorem~\ref{main} we proved that the opened core has a sparse basis.   The number of elements of the basis is equal to the number of independent variables explored during the peeling algorithm.  Furthermore, there is a one-to-one correspondence between the independent variables and the elements of the sparse basis, as we explain now.  

Consider the synchronous peeling procedure defined in the proof of Theorem~\ref{main}.   The synchronous peeling procedure is a compressed version of the peeling algorithm in the following sense. At any step of  synchronous peeling, we peel  all the variables  in the remaining graph that have degree $0$ or $1$.  
Let us now denote the graph of the opened core by $G^*=(C^*,V^*,E^*)$.
Consider an independent variable $i \in V^*$ and assume that the variable $i$ is removed at step $t_i$ of the synchronous peeling algorithm. Let $\mathcal{H}_{G^*}(i,t_i)$ be the set of all the variables $u$ such that\footnote{We denote by $d_{G^*}(i,u)$ the graph distance between the variables $u$ and $i$ in the opened core.} $d_{G^*}(i,u) \leq t_i$ and $u$ is peeled at some time \emph{before} $i$. We also include in $\mathcal{H}_{G^*}(i,t_i)$  any check node (together with its edges) whose variables are all inside $\mathcal{H}_{G^*}(i,t_i)$.  Intuitively, $\mathcal{H}_{G^*}(i,t_i)$ corresponds to the history of the variable $i$ with respect to the peeling procedure. Figure~\ref{fig:treepeel}  illustrates these concepts via a simple expample. 
  \begin{figure}[ht!]
 \begin{center} 
  \includegraphics[width=8cm]{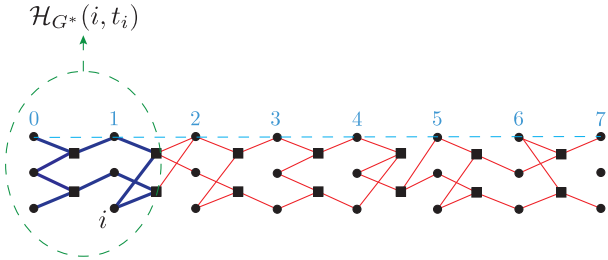}
 \end{center}
 \caption{Variable $i$ is an independent variable that is peeled off at the third step of the synchronous peeling algorithm, i.e., $t_i=3$.  The sub-graph  $ \mathcal{H}_{G^*}(i,t_i) $ consists of all the variables and checks of the opened core (together with the edges between them) that are peeled at some time before $i$ and whose distance from $i$ is less than $t_i$. }\label{fig:treepeel}
\end{figure}
As we explained above, the (synchronous) peeling procedure on the opened core  propagates like a wave from the boundaries towards the middle of the core, with a constant speed $v$. As a result, if the variable $i$ is at a (variable) position $p \in \{0,,1\cdots,L-1\}$, then we have\footnote{Of course, this relation is true for most of the independent variables in the opened core and there is a (vanishing) fraction of variables for which it takes $O(\log \log n)$ steps for them to be peeled off.}
$t_i \approx v p$. As a result, when $n$ is large and $n \gg L$, then $\mathcal{H}_{G^*}(i,t_i)$ is w.h.p a tree whose leaf nodes are located at one of the boundaries of the opened core (see Figure~\ref{fig:treepeel}).  Let us now see how the basis vector corresponding to the independent variable $i$ looks like. One can think of $\mathcal{H}_{G^*}(i,t_i)$ as a sub-graph or a sub-formula of $G^*$. Also, since we are solving the equation $H \underline{x} = \underline{0}$, a solution of  $\mathcal{H}_{G^*}(i,t_i)$ can naturally be extended (lifted) to a solution of $G^*$ by simply assigning $0$ to the variables in $G^*  \setminus \mathcal{H}_{G^*}(i,t_i) $.  
Consider a solution of $\mathcal{H}_{G^*}(i,t_i)$ for which the value that the variable  $i$ takes is $1$. Since the peeling succeeds on $\mathcal{H}_{G^*}(i,t_i)$ and $i$ is an independent variable, such a solution exists (one can find such a solution by assigning $1$ to $i$ and then backtracking on $\mathcal{H}_{G^*}(i,t_i)$). Such a solution, when extended to a solution of $G^*$  is the corresponding basis element for the variable $i$.      


\end{document}